\theoremstyle{plain}
\newtheorem{theorem}{Theorem}
\newtheorem{lemma}[theorem]{Lemma}
\theoremstyle{definition}
\theoremstyle{remark}
\renewcommand{\mathbf}{\boldsymbol}
\renewcommand{\leq}{\leqslant}
\renewcommand{\geq}{\geqslant}
\newcommand{\defeq}{\triangleq}
\newcommand{\norm}[1]{\left\lVert#1\right\rVert}
\newcommand{\size}[1]{\left\lvert#1\right\rvert}
\newcommand{\abs}[1]{\left\lvert#1\right\rvert}
\newcommand{\set}[1]{\mathcal{#1}}
\newcommand{\rv}[1]{\mathsf{#1}}
\DeclareMathOperator{\argmin}{argmin}
\DeclareMathOperator{\argmax}{argmax}
\DeclareMathOperator{\D}{d} % 'd' for differential 
\DeclareMathOperator{\Ker}{ker}
\DeclareMathOperator{\Span}{span}
\DeclareMathOperator{\Cone}{cone}
\newcommand{\trans}{\mathsf{T}}
\DeclarePairedDelimiterX{\infdiv}[2]{(}{)}{#1\delimsize\Vert#2}
\DeclarePairedDelimiterX{\inner}[2]{\langle}{\rangle}{#1,#2}
\newcommand\ie{\textit{i.e.}}
\newcommand\eg{\textit{e.g.}}
\newcommand\wrt{w.r.t.~}
\newcommand\aka{a.k.a.~}
\newcommand{\hide}[1]{}   % Used for hide large section of codes
\begin{document}

\title{Comments on ``Channel Coding Rate in the Finite Blocklength Regime'': On the Quadratic Decaying Property of the Information Rate Function}

\author{\IEEEauthorblockN{Michael X. Cao, Marco Tomamichel}\\
\IEEEauthorblockA{Department of Electrical and Computer Engineering \& Centre for Quantum Technologies,\\
National University of Singapore}}
\maketitle

%***********************************************************
\begin{abstract}
The quadratic decaying property of the information rate function states that given a fixed conditional distribution $p_{\rv{Y}|\rv{X}}$, the mutual information between the (finite) discrete random variables $\rv{X}$ and $\rv{Y}$ decreases at least quadratically in the Euclidean distance as $p_\rv{X}$ moves away from the capacity-achieving input distributions. It is a property of the information rate function that is particularly useful in the study of higher order asymptotics and finite blocklength information theory, where it was already implicitly used by Strassen~\cite{strassen1962asymptotische} and later,  more explicitly, by Polyanskiy--Poor--Verd\'u~\cite{polyanskiy2010channel}.
However, the proofs outlined in both works contain gaps that are nontrivial to close.
This comment provides an alternative, complete proof of this property.
\end{abstract}

%***********************************************************
\section{Background}
Let $W_{\rv{Y}|\rv{X}}\in\set{P}(\set{Y}|\set{X})$ be a discrete memoryless channel (DMC) from $\set{X}$ to $\set{Y}$ where $\set{X}$ and $\set{Y}$ are some finite sets.
The \emph{information rate function} of $W_{\rv{Y}|\rv{X}}$, \ie,
\begin{align}
 \begin{aligned}[c] I_W: \quad \set{P}(\set{X}) &\to \mathbb{R}_{\geq 0}\\
      p_\rv{X} &\mapsto D\infdiv*{p_\rv{X}\cdot W_{\rv{Y}|\rv{X}}}{p_\rv{X}\times \sum_{\tilde{x}\in\set{X}}W_{\rv{Y}|\rv{X}}(\cdot|x)\cdot p_\rv{X}(\tilde{x})}.\end{aligned}
\end{align}
is a common and useful function in the information theory.
It is a building block of many information quantities of interest; and on its own, $I_W(p_\rv{X})$ equals the asymptotic reliable communication rate across i.i.d. copies of $W_{\rv{Y}|\rv{X}}$ with i.i.d. input source each distributed according to $p_\rv{X}$.
It is well-known that $\sup_{p_\rv{X}\in\set{P}(\set{X})} I_W(p_\rv{X}) = \max_{p_\rv{X}\in\set{P}(\set{X})}I_W(p_\rv{X}) = C(W_{\rv{Y}|\rv{X}})$ is the capacity of $W_{\rv{Y}|\rv{X}}$.
We denote the set of all maximizers (\aka the capacity-achieving input distributions) by $\Pi$.
It is also well-known that $\Pi$ is compact.
As a result, for each $p_\rv{X}\in\set{P}(\set{X})$, the following is well-defined
\begin{equation}\label{eq:def:ppi}
p_\rv{X}^\Pi\defeq\argmin_{p_\rv{X}^\Pi\in\Pi}\norm{p_\rv{X}-p_\rv{X}^\Pi}_2 ,
\end{equation}
where $\norm{\cdot}_2$ denotes the Euclidean norm.
One of the properties of the information rate function is that $I_W(p_\rv{X})$ decreases at least quadratically in the Euclidean distance as $p_\rv{X}$ moves away from the set of capacity-achieving input distributions.
This property, as formalized as the following theorem,  is particularly useful in finite-blocklength information theory (\eg, see~\cite[Theorem 48]{polyanskiy2010channel} and~\cite[Lemma~7]{tomamichel2013tight}.
\begin{theorem}\label{thm:main}
Given a DMC $W_{\rv{Y}|\rv{X}}\in\set{P}(\set{Y}|\set{X})$ where $\set{X}$ and $\set{Y}$ are some finite sets, there exist $\alpha, \mu >0$ such that
\begin{equation}\label{eq:main}
I_W(p_\rv{X}) \leq C(W_{\rv{Y}|\rv{X}}) - \alpha\cdot\norm{p_\rv{X}-p_\rv{X}^\Pi}_2^2    
\end{equation}
for all $p_\rv{X}\in\Pi_\mu\defeq\left\{p_\rv{X}\in\set{P}(\set{X})\middle\vert \norm{p_\rv{X}-p_\rv{X}^\Pi}_2\leq\mu\right\}$.
\end{theorem}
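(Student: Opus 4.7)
I would fix any $p_\rv{X}^*\in\Pi$ and set $q_\rv{Y}^*\defeq\sum_{x}p_\rv{X}^*(x)W_{\rv{Y}|\rv{X}}(\cdot|x)$; a classical fact states that this capacity-achieving output distribution is independent of the choice of $p_\rv{X}^*\in\Pi$. Writing $p_\rv{X} W$ for $\sum_{x}p_\rv{X}(x)W_{\rv{Y}|\rv{X}}(\cdot|x)$, the ``golden formula'' $I_W(p_\rv{X})=\sum_{x}p_\rv{X}(x)D\infdiv*{W_{\rv{Y}|\rv{X}}(\cdot|x)}{q_\rv{Y}^*}-D\infdiv*{p_\rv{X} W}{q_\rv{Y}^*}$, combined with the KKT conditions at capacity---which give $D\infdiv*{W_{\rv{Y}|\rv{X}}(\cdot|x)}{q_\rv{Y}^*}=C(W_{\rv{Y}|\rv{X}})$ for $x\in\set{X}_0\defeq\bigcup_{p^*\in\Pi}\fnc{supp}(p^*)$ and a strict inequality otherwise (so in particular $\fnc{supp}(W_{\rv{Y}|\rv{X}}(\cdot|x))\subseteq\fnc{supp}(q_\rv{Y}^*)$ for every $x$, since otherwise the KKT inequality would read $\infty\leq C$)---yields
\[C(W_{\rv{Y}|\rv{X}})-I_W(p_\rv{X})=D\infdiv*{p_\rv{X} W}{q_\rv{Y}^*}+\sum_{x\notin\set{X}_0}p_\rv{X}(x)\gamma(x),\]
where $\gamma(x)\defeq C(W_{\rv{Y}|\rv{X}})-D\infdiv*{W_{\rv{Y}|\rv{X}}(\cdot|x)}{q_\rv{Y}^*}>0$ for $x\notin\set{X}_0$; both summands are non-negative.

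Next I would lower-bound each term. A second-order Taylor expansion of $D\infdiv*{\cdot}{q_\rv{Y}^*}$ around $q_\rv{Y}^*$---legitimate by the support containment noted above---yields $D\infdiv*{p_\rv{X} W}{q_\rv{Y}^*}\geq c_1\norm{p_\rv{X} W-q_\rv{Y}^*}_2^2$ for some $c_1>0$ whenever the $\ell_2$-distance is small, a regime ensured by choosing $\mu$ small since $p_\rv{X} W-q_\rv{Y}^*=(p_\rv{X}-p_\rv{X}^\Pi)W$. Setting $A\defeq\norm{p_\rv{X} W-q_\rv{Y}^*}_2$, $B\defeq\sum_{x\notin\set{X}_0}p_\rv{X}(x)$, and $\gamma_{\min}\defeq\min_{x\notin\set{X}_0}\gamma(x)>0$, this gives $C(W_{\rv{Y}|\rv{X}})-I_W(p_\rv{X})\geq c_2(A^2+B)$ with $c_2\defeq\min(c_1,\gamma_{\min})$.

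Finally, to convert $A^2+B$ into a bound on $\norm{p_\rv{X}-p_\rv{X}^\Pi}_2$, I would observe that the same KKT analysis characterizes $\Pi=\{p_\rv{X}\in\set{P}(\set{X}) : p_\rv{X} W=q_\rv{Y}^*,\ p_\rv{X}(x)=0\text{ for all }x\notin\set{X}_0\}$ as a polyhedron (note that $p_\rv{X}^\Pi(x)=0$ for $x\notin\set{X}_0$ is automatic). Hoffman's error bound for polyhedra then supplies $c_H>0$ such that $\norm{p_\rv{X}-p_\rv{X}^\Pi}_2\leq c_H(A+B)$ for all $p_\rv{X}\in\set{P}(\set{X})$. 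Using $B\in[0,1]$, hence $B\geq B^2$, one gets $A^2+B\geq A^2+B^2\geq(A+B)^2/2\geq\norm{p_\rv{X}-p_\rv{X}^\Pi}_2^2/(2c_H^2)$, proving the theorem with $\alpha=c_2/(2c_H^2)$. The conceptual crux---and, I suspect, the step where the proofs of \cite{strassen1962asymptotische} and \cite{polyanskiy2010channel} stumbled---is precisely this final appeal to Hoffman's lemma: without it one is forced to decompose $p_\rv{X}-p_\rv{X}^\Pi$ into components tangent and transverse to $\Pi$ and to separately quantify strict concavity of $I_W$ along flat directions of $W$ (i.e., $\Ker W_{\rv{Y}|\rv{X}}^\trans$), whereas the polyhedral error bound performs all of this geometric bookkeeping in one line.
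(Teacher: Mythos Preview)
Your route is genuinely different from the paper's. The paper expands $I_W$ in a Taylor series along rays from $\Pi$ and then proves, as its main technical lemma (Lemma~\ref{lem:cone:closed}), that the cone generated by $\{p_\rv{X}-p_\rv{X}^\Pi\}$ is closed---the point being that this cone depends on $p_\rv{X}^\Pi$ only through its support, hence is a finite union of closed cones---so that a continuous strictly positive function on the unit directions attains a positive minimum. Your use of Hoffman's lemma replaces that compactness argument by a single off-the-shelf polyhedral bound; and if you invoke Pinsker ($D\infdiv*{q}{q_\rv{Y}^*}\geq\tfrac12\norm{q-q_\rv{Y}^*}_1^2\geq\tfrac12\norm{q-q_\rv{Y}^*}_2^2$) instead of a local Taylor expansion for the $A^2$ term, your inequality in fact holds on all of $\set{P}(\set{X})$, not merely on $\Pi_\mu$.

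There is, however, a genuine gap: the KKT conditions give $D\infdiv*{W_{\rv{Y}|\rv{X}}(\cdot|x)}{q_\rv{Y}^*}=C$ for $x\in\set{X}_0$ and $\leq C$ for all $x$, but \emph{not} strict inequality for $x\notin\set{X}_0$. Take $\set{X}=\{0,1,2\}$, $\set{Y}=\{0,1,2,3\}$ with rows $W(\cdot|0)=(\tfrac12,\tfrac12,0,0)$, $W(\cdot|1)=(0,0,\tfrac12,\tfrac12)$, $W(\cdot|2)=(\tfrac12,0,\tfrac12,0)$: here $C=\log 2$, $q_\rv{Y}^*$ is uniform, and the equation $p_\rv{X}W=q_\rv{Y}^*$ forces $p_\rv{X}=(\tfrac12,\tfrac12,0)$, so $\Pi$ is a singleton and $\set{X}_0=\{0,1\}$; yet $D\infdiv*{W(\cdot|2)}{q_\rv{Y}^*}=\log 2=C$, so $\gamma(2)=0$ and your $\gamma_{\min}=0$, making $c_2=0$. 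The fix is immediate: replace $\set{X}_0$ throughout by $\set{X}_1\defeq\{x:\gamma(x)=0\}\supseteq\set{X}_0$. Then $\min_{x\notin\set{X}_1}\gamma(x)>0$ by construction, and since $I_W(p_\rv{X})=C$ iff both $D\infdiv*{p_\rv{X}W}{q_\rv{Y}^*}=0$ and $\sum_x p_\rv{X}(x)\gamma(x)=0$, one still has the polyhedral description $\Pi=\{p_\rv{X}\in\set{P}(\set{X}):p_\rv{X}W=q_\rv{Y}^*,\ p_\rv{X}(x)=0\ \forall x\notin\set{X}_1\}$; Hoffman then applies with $B'\defeq\sum_{x\notin\set{X}_1}p_\rv{X}(x)$ in place of $B$, and the remainder of your argument goes through verbatim.
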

To the best of our knowledge, the statement of this theorem first appeared as one of the intermediate steps in~\cite{strassen1962asymptotische}\footnote{Although quadratic decay is not stated in~\cite{strassen1962asymptotische} explicitly, it follows from (4.41) together with a uniform bound on the third-order term in the Taylor series expansion.}, and was later stated more explicitly and re-proven in~\cite[Page~80]{polyanskiy2010thesis} and~\cite[Page~2351]{polyanskiy2010channel}.
Both proofs considered the Taylor series expansion of the function $I_W$ at points on $\Pi$, \ie, (see~\cite[Eq~(4.39)]{strassen1962asymptotische} and~\cite[Eq~(498)]{polyanskiy2010channel}) 
\begin{align}\label{eq:Taylor}
I_W(p_\rv{X}) &\leq I_W(p_\rv{X}^\Pi) + \nabla I(p^\Pi_\rv{X})^\trans \cdot (p_\rv{X}-p^\Pi_\rv{X}) + \frac{1}{2}(p_\rv{X}-p^\Pi_\rv{X})^\trans \cdot \mathcal{H}(p^\Pi_\rv{X}) \cdot (p_\rv{X}-p^\Pi_\rv{X}) + \beta\cdot \norm{p_\rv{X}-p^\Pi_\rv{X}}_2^3
\end{align}
for some constant $\beta>0$ that only depends on the channel $W_{\rv{Y}|\rv{X}}$\footnote{
Actually, in~\cite[Eq~(4.39)]{strassen1962asymptotische}, the author provided a Taylor series expansion of the sum of $I_W$ and the information variance function.
In~\cite[Eq~(498)]{polyanskiy2010channel}, the tail term is given as $o(\norm{p_\rv{X}-p^\Pi_\rv{X}}_2^2)$ instead of some explicit uniform bound.
}.
Both proofs then argued that (see~\cite[Eq~(4.41)]{strassen1962asymptotische} and~\cite[Eq~(501)]{polyanskiy2010channel}) the term
\begin{equation}\label{eq:fs}
\Phi(p_\rv{X}) \defeq \frac{\nabla I(p^\Pi_\rv{X})^\trans \cdot (p_\rv{X}-p^\Pi_\rv{X}) + \frac{1}{2}(p_\rv{X}-p^\Pi_\rv{X})^\trans \cdot \mathcal{H}(p^\Pi_\rv{X}) \cdot (p_\rv{X}-p^\Pi_\rv{X})}{\norm{p_\rv{X}-p^\Pi_\rv{X}}_2^2}
\end{equation}
is negative and bounded away from $0$ for all $p_\rv{X}\in\Pi_\mu\setminus\Pi$ for some $\mu>0$.
Both proofs correctly pointed out that both terms $\nabla I(p^\Pi_\rv{X})^\trans \cdot (p_\rv{X}-p^\Pi_\rv{X})$ and $(p_\rv{X}-p^\Pi_\rv{X})^\trans \cdot \mathcal{H}(p^\Pi_\rv{X}) \cdot (p_\rv{X}-p^\Pi_\rv{X})$ are non-positive, and cannot be zero simultaneously unless $p_\rv{X}=p_\rv{X}^\Pi$ (see~\cite[Lemma~3.2 and the arguments after (4.43)]{strassen1962asymptotische} and~\cite[below~(498)]{polyanskiy2010channel}): If we assume both terms to be zero, then the second term being zero will force $p_\rv{X}-p^\Pi_\rv{X}\in\Ker(W_{\rv{Y}|\rv{X}})$, in which case the function $I_W$ is linear along the direction from $p^\Pi_\rv{X}$ to $p_\rv{X}$, which, combined with the hypothesis that the first term being zero, will force $I_W(p_\rv{X})=I_W(p^\Pi_\rv{X})$.
This is a key observation, and we also use this exact same argument in our proof.
In other words, it has been fully established in the previous works that~$\Phi$ is strictly negative for all $p_\rv{X}\in\Pi_\mu\setminus\Pi$.
However, we are disputing that the previous proofs are sufficient to show that~$\Phi$ is \emph{bounded away} from zero.
We will exhibit the gaps in the arguments in~\cite{strassen1962asymptotische} and~\cite{polyanskiy2010channel} as follows.

In~\cite{strassen1962asymptotische}, it is argued that, if $\Phi$ is not bounded away from zero, then there must exist a sequence $\{p_k\}_{k=1}^\infty\in\Pi_\mu\setminus\Pi$ such that
\begin{enumerate}
   \item $\limsup_{k\to\infty}\Phi(p_k)\geq 0$;
   \item $\norm{p_k - p_k^\Pi}_2=\mu$ for all $k$ as a result of~(4.38), \ie, $\nabla I(p^\Pi_\rv{X})^\trans \cdot (p_\rv{X}-p^\Pi_\rv{X}) \leq 0$ for all $p_\rv{X}\in\Pi_\eta$.
\end{enumerate}
We agree with the existence of such a sequence satisfying 1), but we fail to see how 2) can also be met.
Indeed, using~(4.38), one can rewrite $\limsup_{k\to\infty}\Phi(p_k)\geq 0$ as
\begin{equation}
\limsup_{k\to\infty}\frac{(p_k-p^\Pi_k)^\trans \cdot \mathcal{H}(p^\Pi_k) \cdot (p_k-p^\Pi_k)}{\norm{p_k-p^\Pi_k}_2^2} = \limsup_{k\to\infty} \left(\frac{p_k-p^\Pi_k}{\norm{p_k-p^\Pi_k}}\right)^\trans\cdot\mathcal{H}(p^\Pi_k)\cdot\left(\frac{p_k-p^\Pi_k}{\norm{p_k-p^\Pi_k}}\right) \geq 0
\end{equation}
which enables one to replace the sequence $\{p_k\}_k$ by a normalized version $\tilde{p}_k\gets p_k^\Pi+\mu\cdot\norm{p_k-p_k^\Pi}_2^{-1}\cdot(p_k-p_k^\Pi)$.
However, the sequence $\{\tilde{p}_k\}_k$ constructed this way may protrude outside the probability polytope $\set{P}(\set{X})$ (see~Fig.~\ref{fig:strassen:counter:example}).
\begin{figure}\centering
\begin{minipage}[t]{.49\textwidth}
\centering
\begin{tikzpicture}[x=5cm, y=4.330cm, dot/.style={draw, fill=black, circle, inner sep=0pt, outer sep=0pt, minimum size=1pt}, decoration={markings, mark= at position 0.5 with {\arrow{stealth}}}]
    \draw[fill=gray!20,draw=none] (.091,.182) -- (.291,.582) -- (.709,.582) -- (.909,.182) -- (.1,.182);
    \node[anchor=south east, gray] at (.9,.182) {$\Pi_\mu$};
    \draw (0,0) -- (1,0) -- (.5,1) -- (0,0);
    \node[anchor=south west] at (1,0) {$\set{P}(\set{X})$};
    \draw[red] (.191,.382) -- (.809,.382); 
    \node[anchor=east, red] at (.191,.382) {$\Pi$};
    \draw[blue, out=180, in=40, postaction={decorate}](.7, .55) to (.191,.382);
    \draw[blue, dashed, postaction={decorate}](.7, .582) -- (.191, .582);
\end{tikzpicture}
\caption{Exposition of the problem with the proof in~\cite{strassen1962asymptotische}: The statement of~\eqref{eq:fs} \emph{not} being bounded from $0$ is \emph{equivalent} to the existence of some sequence $\{p_k\}_{k=1}^\infty\in\Pi_\mu\setminus\Pi$ such that $\limsup_{k\to\infty}\Phi(p_k)\geq 0$ (see the blue line). However, the ``shape'' of the probability polytope $\set{P}(\set{X})$ may prevent such a sequence to be normalized (see the dashed line).}
\label{fig:strassen:counter:example}
\end{minipage}
~
\begin{minipage}[t]{.49\textwidth}
\centering
%\begin{tikzpicture}[x=5cm, y=4.330cm, dot/.style={draw, fill=black, circle, inner sep=0pt, outer sep=0pt, minimum size=1pt}, decoration={markings, mark= at position 0.5 with {\arrow{stealth}}}]
%    \draw[fill=gray!20,draw=none] (.091,.182) -- (.291,.582) -- (.709,.582) -- (.909,.182) -- (.1,.182);
%    \node[anchor=south east, gray] at (.9,.182) {$\Pi_\mu$};
%    \draw (0,0) -- (1,0) -- (.5,1) -- (0,0);
%    \node[anchor=south west] at (1,0) {$\set{P}(\set{X})$};
%    \draw[red] (.191,.382) -- (.809,.382); 
%    \node[anchor=east, red] at (.191,.382) {$\Pi$};
%    \draw[blue] (.67,.582) -- (.25,.382);
%    \draw[blue, dashed] (.67,.582) -- (.67,.382);
%    \draw[blue] (.35,.382) arc (0:28:.1); \node at (.4,.415) {$\theta$};
%    \node[dot, blue] at (.67,.582) {}; \node[dot, blue] at (.25,.382) {}; \node[dot, blue] at (.67,.382) {};
%\end{tikzpicture}
%\caption{An adaptation of the arguments in~\cite{strassen1962asymptotische} by considering \emph{all} line segments any points that are $\mu$-away from $\Pi$ to any points on $\Pi$ with additional geometric and compactness arguments.}
%\label{fig:strassen:potential:fix}
\includegraphics[height=4.4cm]{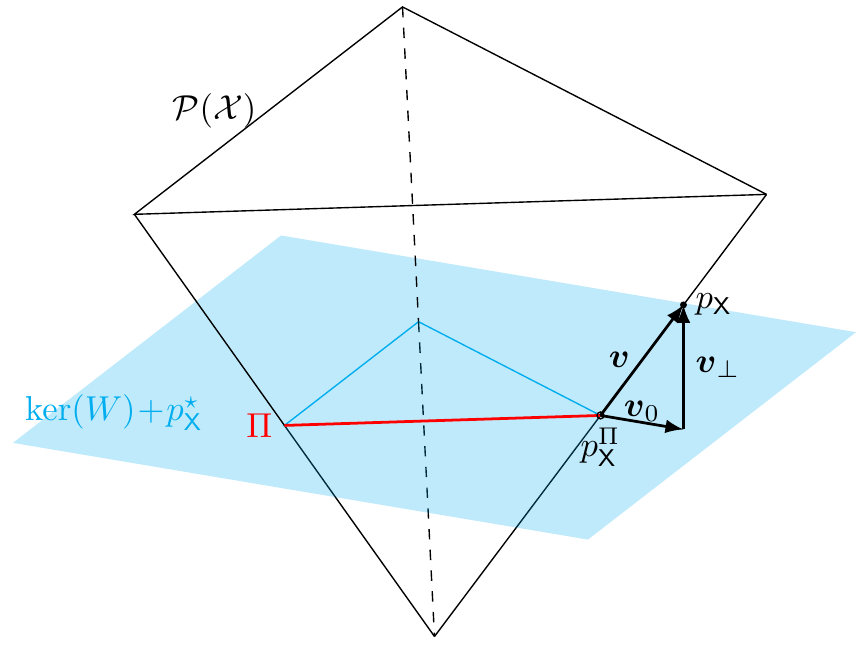}
\caption{Exposition of the problem with the proof in~\cite{polyanskiy2010channel}: This depicts a situation when $\mathbf{v}_0$ is pointing outside of the probability polytope (see~\eqref{eq:disagree}).}
\label{fig:polyanskiy:counter:example}
\end{minipage}
\end{figure}

In~\cite{polyanskiy2010channel}, the authors decompose $p_\rv{X}-p^\Pi_\rv{X}=v_0+v_\perp$ where $v_0\in\Ker(W_{\rv{Y}|\rv{X}})$ and $v_\perp\in\Ker(W_{\rv{Y}|\rv{X}})^\perp$, and claim (see~\cite[(498)--(505)]{polyanskiy2010channel})
\begin{align}
\Phi(p_\rv{X})\cdot \norm{p_\rv{X}-p^\Pi_\rv{X}}_2^2 %= \Phi(p_\rv{X})\cdot \norm{v_0+v_\perp}_2^2
&= \nabla I(p^\Pi_\rv{X})^\trans \cdot (v_0+v_\perp) + \frac{1}{2}v_\perp^\trans \cdot \mathcal{H}(p^\Pi_\rv{X}) \cdot v_\perp \\
\label{eq:unclear}
&\leq \nabla I(p^\Pi_\rv{X})^\trans \cdot v_0 + \frac{1}{2}v_\perp^\trans \cdot \mathcal{H}(p^\Pi_\rv{X}) \cdot v_\perp \\
\label{eq:disagree}
&\leq  -\Lambda_1 \cdot \norm{v_0}_2 - \frac{\lambda_{\min+}(W_{\rv{Y}|\rv{X}}W_{\rv{Y}|\rv{X}}^\trans)}{2\cdot q_{\min}} \cdot \norm{v_\perp}_2^2\\
&\leq  -\min\left\{\Lambda_1, \frac{\lambda_{\min+}(W_{\rv{Y}|\rv{X}}W_{\rv{Y}|\rv{X}}^\trans)}{2\cdot q_{\min}}\right\} \cdot \norm{p_\rv{X}-p^\Pi_\rv{X}}_2^2
\end{align}
where $\Lambda_1$ is some positive number (independent from $p_\rv{X}$) and $\lambda_{\min+}(\cdot)$ stands for the smallest strictly positive eigenvalue of a matrix.
We particularly disagree with~\eqref{eq:disagree}, as it may occur that $p^\Pi_\rv{X}+t\cdot v_0\not\in\set{P}(\set{X})$ for all $t>0$ (see Fig.~\ref{fig:polyanskiy:counter:example}).
In such cases, there is no guarantee that $\nabla I(p^\Pi_\rv{X})^\trans \cdot v_0\leq 0$ as $I_W$ may actually increase along the direction of $v_0$ from $p_\rv{X}^\Pi$.
% I am thinking deleting the following two sentences since they are not technically necessary.
%Even if one assumes $p^\Pi_\rv{X}+t\cdot v_0\in\set{P}(\set{X})$ for all $t>0$ small enough, it is unclear to us why the term $\nabla I(p^\Pi_\rv{X})^\trans \cdot v_0$ is bounded away from $0$.
%Additionally, it is unclear to us why $\nabla I(p^\Pi_\rv{X})^\trans \cdot v_\perp\leq 0$ which is used in~\eqref{eq:unclear}.

One potential way to close the gap in the proof above involves careful observations of the shape of the probability polytope, especially its implication of the relationship between the vector $(p_\rv{X}-p_\rv{X}^\Pi)$ and the affine subspace that $\Pi$ lies in when $p_\rv{X}^\Pi$ is at the boundary of $\Pi$.
Namely, we argue that the angle between the aforementioned vector and the affine subspace is bounded away from $0$.
However, arguments along this line of thinking turned out to be tedious.
In the next section, we present an alternative concise proof that was partially inspired by this very observation. 
%instructive 

%***********************************************************
\section{An alternative proof} \label{sec:alt:proof}

In this section, we present an alternative proof that bypasses the problem we pointed out in the last section.
Same as the previous proofs, we start with~\eqref{eq:Taylor}, \ie, 
\begin{equation}
I_W(p^\Pi_\rv{X}) - I_W(p_\rv{X}) = -\Phi(p_\rv{X})\cdot \norm{p_\rv{X}-p^\Pi_\rv{X}}_2^2 + o(\norm{p_\rv{X}-p^\Pi_\rv{X}}_2^2).
\end{equation}
As already established in the previous proofs, we know $\Phi(p_\rv{X})<0$ for any pmf $p_\rv{X}\not\in\Pi$. 
The technical difficulty is to bound $\Phi(p_\rv{X})$ away from $0$ and the term ${o(\norm{p_\rv{X}-p^\Pi_\rv{X}}_2^2)}/{\norm{p_\rv{X}-p^\Pi_\rv{X}}_2^2}$ close to $0$ for all $p_\rv{X}$ in some vicinity of $\Pi$.
To achieve that, our proof follows the following steps:
\begin{enumerate}
    \item Construct some function $\alpha(p_\rv{X})>0$ such that for $-\Phi(p_\rv{X})\geq 2\alpha(p_\rv{X})$ where $\alpha$ only depends on the direction from $p_\rv{X}^\Pi$ to $p_\rv{X}$, \ie, $(p_\rv{X}-p_\rv{X}^\Pi)/\norm{p_\rv{X}-p_\rv{X}^\Pi}_2$.
    \item Bound term ${o(\norm{p_\rv{X}-p^\Pi_\rv{X}}_2^2)}/{\norm{p_\rv{X}-p^\Pi_\rv{X}}_2^2}$ such that it only depends on $\norm{p_\rv{X}-p_\rv{X}^\Pi}_2$.
    Combined with 1), this means that
    \begin{equation}\label{eq:idea}
    I_W(p^\Pi_\rv{X}) - I_W(p^\Pi_\rv{X}+t\cdot d_\rv{X}) \geq 2\alpha(d_\rv{X})\cdot t^2 + f(t)\cdot t^2
    \end{equation}
    where $d_\rv{X}$ is a norm-1 vector such that $(p^\Pi_\rv{X}+d_\rv{X})^\Pi=p^\Pi_\rv{X}$, and where $f$ is some continuous function with $f(0)=0$.
    \item One can show that the set of all possible $d_\rv{X}$ in~\eqref{eq:idea}, \ie, all valid directions from a point on $\Pi$ to another point in $\set{P}(\set{X})$ with the former point being the closest point on $\Pi$ \wrt the latter, is closed.
    The idea is that the cone of all valid directions from any fixed point $p_\rv{X}^\Pi$ on $\Pi$ is closed, and there are finitely many different such cones as $p_\rv{X}^\Pi$ moves across $\Pi$.
    (See Lemma~\ref{lem:cone:closed}.)
    \item Using the closedness of the set of $d_\rv{X}$, one can find a smallest $\alpha^\star>0$ such that~\eqref{eq:idea} is valid for all $d_\rv{X}$.
    Then, using the continuity of $f$ and the fact that $f(0)=0$, one can find $\mu$ such that $\abs{f(t)}\leq\alpha^\star$ for all $t\in[0,
    \mu]$.
    Thus,
    \begin{equation}
    I_W(p^\Pi_\rv{X}) - I_W(p^\Pi_\rv{X}+t\cdot d_\rv{X}) \geq \alpha^\star\cdot t^2
    \end{equation}
    for all valid direction $d_\rv{X}$ and all $t\in[0,\mu]$.
\end{enumerate}

Before proceeding to the proof of the theorem, we first treat Step 3) above as follows.
\begin{lemma}\label{lem:cone:closed}
    Let $\set{X}$ be a finite set.
    Let $\Pi$ be the set of capacity achieving distributions of some DMC with input alphabet $\set{X}$.
    Recall that $p_\rv{X}^\Pi$ is defined in~\eqref{eq:def:ppi}.
    Then the cone $\Cone\left(\left\{p_\rv{X}-p_\rv{X}^\Pi \,\middle|\, p_\rv{X} \in\set{P}(\set{X}) \right\}\right)$ is closed.
\end{lemma}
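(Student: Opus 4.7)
My plan is to exploit the polyhedral structure of $\Pi$ and show that the cone of interest is a \emph{finite} union of polyhedral (hence closed) cones. First, I would observe that $\Pi$ is in fact a convex polytope: it is standard that every capacity-achieving input distribution induces the same output distribution $q^\star = p_\rv{X}\cdot W_{\rv{Y}|\rv{X}}$, so $\Pi$ is cut out from $\set{P}(\set{X})$ by the linear constraint $p_\rv{X}\cdot W_{\rv{Y}|\rv{X}} = q^\star$ together with the support conditions $p_\rv{X}(x)=0$ for every $x$ with $D\infdiv*{W_{\rv{Y}|\rv{X}}(\cdot|x)}{q^\star} < C(W_{\rv{Y}|\rv{X}})$, all of which are linear.

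Next, I would invoke the standard variational characterization of the metric projection onto a closed convex set: $p_\rv{X}^\Pi = q$ if and only if $p_\rv{X}-q \in N_\Pi(q)$, the normal cone to $\Pi$ at $q$. Since $\Pi$ is a polytope, $N_\Pi(q)$ depends only on the unique face $F$ of $\Pi$ whose relative interior contains $q$; thus there are only finitely many faces $F_1,\dots,F_k$ with corresponding normal cones $N_{F_1},\dots,N_{F_k}$, and the map $F\mapsto N_F$ is inclusion-reversing in the face lattice.

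Using this, I would then establish the set identity
\[
\left\{p_\rv{X} - p_\rv{X}^\Pi \,:\, p_\rv{X} \in \set{P}(\set{X})\right\}
= \bigcup_{i=1}^{k} \left[N_{F_i} \cap \left(\set{P}(\set{X}) - F_i\right)\right].
\]
The inclusion ``$\subseteq$'' is immediate from the variational characterization, taking $F_i$ to be the face whose relative interior contains $p_\rv{X}^\Pi$. For ``$\supseteq$'', given $v \in N_{F_i}\cap(\set{P}(\set{X}) - F_i)$, I would write $v = p_\rv{X} - q$ with $q \in F_i$ and $p_\rv{X}\in\set{P}(\set{X})$, and observe that $v \in N_{F_i}\subseteq N_\Pi(q)$ by the inclusion-reversing property, which forces $q = p_\rv{X}^\Pi$. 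This is the step I expect to require the most care: it is where the geometry of $\Pi$ truly enters and is precisely the kind of issue that the informal arguments criticized in Section~1 sidestepped.

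Finally, for each face $F_i$, the set $\set{P}(\set{X})-F_i$ is a Minkowski sum of two polytopes, hence a polytope, and $N_{F_i}$ is a polyhedral cone, so their intersection is a bounded polyhedron, i.e., a polytope. The conic hull of a polytope is generated by its finitely many vertices and is therefore a polyhedral cone, which is closed by the Minkowski--Weyl theorem. Since $\Cone(\cdot)$ commutes with unions,
\[
\Cone\left(\left\{p_\rv{X}-p_\rv{X}^\Pi \,:\, p_\rv{X}\in\set{P}(\set{X})\right\}\right)
= \bigcup_{i=1}^{k}\Cone\left(N_{F_i} \cap \left(\set{P}(\set{X})-F_i\right)\right)
\]
is a finite union of closed polyhedral cones, and therefore closed.
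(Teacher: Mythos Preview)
Your proposal is correct and is essentially the same argument as the paper's: both show the cone is a finite union of closed polyhedral cones indexed by the face structure of the polytope $\Pi$. The only difference is packaging---you invoke the normal-cone and face-lattice machinery of convex analysis directly, whereas the paper parametrizes by points $p_\rv{X}^\star\in\Pi$, writes out the defining linear inequalities explicitly, and then observes that the resulting cone $\Cone(\set{J}(p_\rv{X}^\star))$ depends only on the zero-set $\set{X}_0(p_\rv{X}^\star)=\{x:p_\rv{X}^\star(x)=0\}$, which is exactly the data singling out the face of $\Pi$ containing $p_\rv{X}^\star$ in its relative interior.
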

\begin{proof}
    For any point $p_\rv{X}^\star\in\Pi$, define
    \begin{equation}\label{eq:def:J:p}
    \set{J}(p_\rv{X}^\star) \defeq \left\{d_\rv{X}\in\mathbb{R}^\set{X} \,\middle\vert\, 
    p_\rv{X}^\star + d_\rv{X} \in\set{P}(\set{X})\ \land \ \left(p_\rv{X}^\star + d_\rv{X}\right)^\Pi = p_\rv{X}^\star \right\}
    \end{equation}
        so that $\left\{p_\rv{X}-p_\rv{X}^\Pi \,\middle|\, p_\rv{X} \in\set{P}(\set{X}) \right\} = \bigcup_{p_\rv{X}^\star\in\Pi} \set{J}(p_\rv{X}^\star)$.
   We also introduce the subspace $\set{V}\defeq\Span(\Pi-p^\star_\rv{X})$ of vectors with components summing to zero and parallel to $\Pi$, which is independent of the choice of $p_\rv{X}^\star\in\Pi$. This allows us to write $\Pi=(p^\star_\rv{X}+\set{V})\cap\mathbb{R}^\set{X}_{\geq 0}$.
   %\footnote{This can be shown as follows: For any two points $p_0^\star$ and $p_1^\star\in\Pi$, we first have $\Span(\Pi-p_1^\star)=\Span((\Pi-p_0^\star)+(p_0^\star-p_1^\star)) \subset\Span(\Pi-p_0^\star)$ since $p_0^\star-p_1^\star$ is a point in $\Span(\Pi-p_0^\star)$. By symmetry, we also have $\Span(\Pi-p_0^\star)\subset\Span(\Pi-p_1^\star)$. Thus, it must hold that $\Span(\Pi-p_0^\star)=\Span(\Pi-p_1^\star)$.}$
   
    We first want to argue that $\Cone( \set{J}(p_\rv{X}^\star))$ is only a function of the support of $p_\rv{X}^\star$, or equivalently the set $\set{X}_0(p_\rv{X}^\star) \defeq \{x\in\set{X}\, \vert\, p_\rv{X}^\star(x) = 0\}$. To see this, we first write
    \begin{align}\label{eq:J:p}
    \set{J}(p_\rv{X}^\star) &= \left\{d_\rv{X}\in\mathbb{R}^\set{X}\, \middle\vert\ \begin{aligned}
    & \sum_{x\in\set{X}} d_\rv{X}(x) = 0 \\
    & d_\rv{X}(x)+p_\rv{X}^\star(x) \geq 0 \quad \forall x\in\set{X}\\
    & \inner{d_\rv{X}}{v} \leq 0 \quad \forall v\in\set{V}\text{ s.t. }v(x)\geq 0 \ \forall x\in\set{X}_0(p_\rv{X}^\star)
    \end{aligned}\right\} .
    \end{align}
    To arrive at the last condition we first use the fact that, for any $p_\rv{X}^\star + d_\rv{X}\in\set{P}(\set{X})$, $\left(p_\rv{X}^\star + d_\rv{X}\right)^\Pi = p_\rv{X}^\star$ if and only if $\inner{d_\rv{X}}{v}\leq 0$ for any vector $v\in\mathbb{R}^\set{X}$ such that $p^\star_\rv{X}+v\in\Pi$, or, in other words, $v\in\Pi-p^\star_\rv{X}$. Since the condition is invariant under multiplication by a scalar, we can extend this to the cone $v \in \set{V}$ with $v(x)\geq 0$ for all $x\in\set{X}_0(p_\rv{X}^\star)$.
    
    Next, let us look at the cone of $\set{J}(p_\rv{X}^\star)$. We can see that the second condition, $d_\rv{X}(x)+p_\rv{X}^\star(x) \geq 0$ for all $x\in\set{X}$ simplifies to $d_\rv{X}(x)\geq 0$ for all $x\in\set{X}_0(p_\rv{X}^\star)$. Moreover, the third condition simply requires $-d_\rv{X}$ to be in the dual cone of $\{ v \in \set{V} \,|\, v(x) \geq 0\ \forall x \in \set{X}_0(p_\rv{X}^\star) \}$. We can thus write
    \begin{align}\label{eq:cone:J:p}
    \Cone\left(\set{J}(p_\rv{X}^\star)\right) &= \left\{d_\rv{X}\in\mathbb{R}^\set{X} \middle\vert\begin{aligned}
    & \sum_{x\in\set{X}} d_\rv{X}(x) = 0 \\
    & d_\rv{X}(x)\geq 0 \quad \forall x\in\set{X}_0(p_\rv{X}^\star)
    \end{aligned}\right\} \cap -\left\{v\in\set{V}\middle\vert v(x)\geq 0 \ \forall x\in\set{X}_0(p_\rv{X}^\star)\right\}^*
    \end{align}
    This cone is closed since it is the intersection of a closed cone with the dual of a cone, which is also closed.
    It evidently also only depends on $p_\rv{X}^\star$ via its support, \ie, via $\set{X}_0(p_\rv{X}^\star)$.
    It then remains to notice that
    \begin{align}\label{eq:cone:J}
        \Cone\left(\left\{p_\rv{X}-p_\rv{X}^\Pi \,\middle|\, p_\rv{X} \in\set{P}(\set{X}) \right\}\right) = \bigcup_{p_\rv{X}^\star\in\Pi} \Cone\left(\set{J}(p_\rv{X}^\star)\right)
    \end{align}
    is a finite union of closed cones, and is thus also closed.
\end{proof}

\begin{proof}[Proof of Theorem~\ref{thm:main}]
Without loss of generality, we assume all output symbols of the channel $W_{\rv{Y}|\rv{X}}$ is reachable by some input, \ie, there does not exists $y\in\set{Y}$ such that $W_{\rv{Y}|\rv{X}}(y|x)=0$ for all $x\in\set{X}$.
Otherwise, one could simply exclude such $y$ from the output alphabet.
This ensures that $q^\star_\rv{Y}$, the unique (see, \eg,~\cite{shannon1957some}) capacity-achieving output distribution of $W_{\rv{Y}|\rv{X}}$, has full support.

Firstly, we consider the Taylor series expansion of the function $I_W$ in a neighborhood of $p_\rv{X}^\star\in\Pi$ along a direction $d_\rv{X}\in\mathbb{R}^{\set{X}}$ with $\norm{d_\rv{X}}_2=1$ such that $p_\rv{X}^\star+t\cdot d_{\rv{X}} \in\set{P}(\set{X})$ for all $t>0$ sufficiently small, \ie,
\begin{equation}\label{eq:Taylor:series:expansion}
I_W(p_\rv{X}^\star+t\cdot d_\rv{X}) = C(W_{\rv{Y}|\rv{X}}) + (\nabla I^\trans \cdot d_\rv{X})\cdot t + (d_\rv{X}^\trans\cdot H \cdot d_\rv{X}) \cdot t^2 + \left(\sum_{n=1}^\infty \frac{(-1)^{n+1}}{(n+2)(n+1)} \cdot \sum_{y\in\set{Y}} \frac{d_\rv{Y}^2(y)}{q^\star_\rv{Y}(y)}\cdot \left[\frac{d_\rv{Y}(y)}{q^\star_\rv{Y}(y)}\right]^n\cdot t^n\right) \cdot t^2
\end{equation}
where, for $y \in \set{Y}$, $x, x_1, x_2 \in \set{X}$,
\begin{align}
d_\rv{Y}(y)&\defeq\sum_{x\in\set{X}} W_{\rv{Y}|\rv{X}}(y|x)\cdot d_\rv{X}(x),\\
\left[\nabla I\right]_x &\defeq \left. \frac{\partial I_W}{\partial p(x)} \right\vert_{p_\rv{X}^\star} =
%\sum_y W_{\rv{Y}|\rv{X}}(y|x) \cdot \log{\frac{W_{\rv{Y}|\rv{X}}(y|x)}{q^\star_\rv{Y}(y)}} - 1
D\infdiv*{W_{\rv{Y}|\rv{X}}(\cdot|x)}{q_\rv{Y}^\star} - 1,
%\qquad \forall x\in\set{X},
\\
\left[H\right]_{x_1, x_2} &\defeq \left.\frac{\partial^2 I_W}{2\cdot\partial p(x_2) \partial p(x_1)}\right\vert_{p_\rv{X}^\star} = -\sum_{y\in\set{Y}} \frac{W_{\rv{Y}|\rv{X}}(y|x_1)\cdot W_{\rv{Y}|\rv{X}}(y|x_2)}{2\cdot q^\star_\rv{Y}(y)}. %\qquad \forall x_1,x_2\in\set{X},\\
\end{align}
Notice that $H$ is a negative semi-definite matrix.
By writing $d_\rv{X}^\trans\cdot H \cdot d_\rv{X} = -\sum_{y}\frac{1}{2\cdot q_\rv{Y}^\star(y)} \cdot \norm{\sum_{x}W_{\rv{Y}|\rv{X}}(\cdot|x)\cdot d_\rv{X}(x)}^2_2$, we see that $d_\rv{X}^\trans\cdot H \cdot d_\rv{X}=0$ if and only if $d_\rv{X}\in\Ker(W_{\rv{Y}|\rv{X}})$.
Observe that the expansion in~\eqref{eq:Taylor:series:expansion} is independent of $p^\star_\rv{X}\in\Pi$.
Moreover, the series in the brackets converges absolutely for all $t\in[0,q_{\min}/\sqrt{\size{\set{X}}})$, where $q_{\min}\defeq\min_{y\in\set{Y}}q^\star_\rv{Y}(y)>0$, since
\begin{equation}\label{eq:bound:third}
\sum_{n=1}^\infty \abs{\frac{(-1)^{n+1}}{(n+2)(n+1)} \cdot \sum_{y\in\set{Y}} \frac{d_\rv{Y}^2(y)}{q^\star_\rv{Y}(y)}\cdot \left[\frac{d_\rv{Y}(y)}{q^\star_\rv{Y}(y)}\right]^n\cdot t^n} \leq \frac{\size{\set{X}}\cdot\size{\set{Y}}}{q_{\min}}\cdot \sum_{n=1}^\infty %\frac{1}{(n+2)(n+1)} \cdot 
\left(\frac{\sqrt{\size{\set{X}}}}{q_{\min}}\cdot t \right)^n = \underbrace{ \frac{\size{\set{X}}\cdot\size{\set{Y}}}{q_{\min}}\cdot \frac{\sqrt{\size{\set{X}}}\cdot t}{q_{\min} - \sqrt{\size{\set{X}}}\cdot t} }_{ =:\ f(t)},
\end{equation}
where we utilize the fact that $\norm{W}_2\defeq\sup_{\norm{v}_2=1}\norm{\sum_{x}W(\cdot|x)\cdot v(x)}_2 \leq \sqrt{\size{\set{X}}}$.

Secondly, as shown in the proofs in~\cite{strassen1962asymptotische} and~\cite{polyanskiy2010thesis,polyanskiy2010channel}, at least one of the quantities $\nabla I^\trans \cdot d_\rv{X}$ and $d_\rv{X}^\trans\cdot H \cdot d_\rv{X}$ is strictly negative.
We repeat the argument here.
Notice that $\nabla I^\trans \cdot d_\rv{X}\leq 0$ (since $p_\rv{X}^\star$ is a maximizer) and $d_\rv{X}^\trans\cdot H \cdot d_\rv{X}\leq 0$ (since $H$ is negative semi-definite).
Assume now further that $d_{\rv{X}}$ is pointing away from $\Pi$, \ie, $p_\rv{X}^\star+t\cdot d_\rv{X}\in\set{P}(\set{X})\setminus\Pi$ for $t>0$ being sufficiently small. We want to argue that either $\nabla I^\trans \cdot d_\rv{X}$ or $d_\rv{X}^\trans\cdot H \cdot d_\rv{X}$ is nonzero.
If $d_\rv{X}^\trans\cdot H \cdot d_\rv{X} =0$ then $d_\rv{X}\in\Ker(W_{\rv{Y}|\rv{X}})$ (as discussed before), and thus $I_W$ changes linearly along the direction of $d_\rv{X}$ (since $d_\rv{Y} = 0$ in this case).
Hence, $I_W(p^\star_\rv{X} + t\cdot d_\rv{X}) = C(W_{\rv{Y}|\rv{X}}) + \nabla I^\trans \cdot d_\rv{X}\cdot t $ for all $t>0$, contradicting with the assumption that $p^\star_\rv{X}+t\cdot d_\rv{X}\not\in\Pi$ unless $I^\trans \cdot d_\rv{X} < 0$.
Therefore, the following function
\begin{equation}
\alpha(d_\rv{X}) \defeq - \frac{1}{2}\cdot\left(\nabla I^\trans\cdot d_\rv{X} + d_\rv{X}^\trans\cdot H \cdot d_\rv{X}\right)
\end{equation}
is strictly positive for all $d_\rv{X}$ pointing away from $\Pi$.

Thirdly, we focus on the set of all \emph{valid} directions, namely 
\begin{align}
    \set{J} &\defeq \left\{\frac{p_\rv{X}-p_\rv{X}^\Pi}{\norm{p_\rv{X}-p_\rv{X}^\Pi}_2} \,\middle\vert\, p_\rv{X}\in\set{P}(\set{X})\setminus\Pi\right\} = \Cone\left( \left\{ p_\rv{X}-p_\rv{X}^\Pi \,\middle\vert\, p_\rv{X}\in\set{P}(\set{X}) \right\} \right) \cap \left\{ d_\rv{X}\in\mathbb{R}^{\set{X}} \, \middle\vert\, \norm{d_\rv{X}}_2=1\right\} \,.
\end{align}
We argue that this set is compact since $\Cone\left( \left\{ p_\rv{X}-p_\rv{X}^\Pi \,\middle\vert\, p_\rv{X}\in\set{P}(\set{X}) \right\} \right)$ is closed (see Lemma~\ref{lem:cone:closed}), and the set $\left\{ d_\rv{X}\in\mathbb{R}^{\set{X}} \, \middle\vert\, \norm{d_\rv{X}}_2=1\right\}$ is bounded and closed.
Since $\alpha$ is a continuous function of $d_\rv{X}$, we may pick
\begin{equation}
\alpha \defeq \min_{d_\rv{X}\in\set{J}} \alpha(d_\rv{X})
\end{equation}
which is strictly positive.
Since $t > t^2$ for $t < 1$, this enables us to write 
\begin{align}
I_W(p_\rv{X}^\star+t\cdot d_\rv{X}) &\leq C(W_{\rv{Y}|\rv{X}}) + (\nabla I^\trans \cdot d_\rv{X})\cdot t^2 + (d_\rv{X}^\trans\cdot H \cdot d_\rv{X}) \cdot t^2 + f(t) \cdot t^2 \leq C(W_{\rv{Y}|\rv{X}}) - \alpha \cdot t^2 + \left( f(t) - \alpha \right) \cdot t^2
\end{align}
for $0 \leq t<q_{\min}/\sqrt{\size{\set{X}}}$.
Notice that $f(0) = 0$ (see~\eqref{eq:bound:third}), and the function is continuous on $[0,q_{\min}/\sqrt{\size{\set{X}}})$.
Thus, there must exist some $\mu\in(0,q_{\min}/\sqrt{\size{\set{X}}})$ such that $f(t) < \alpha$ for all $t\in[0,\mu)$.
Hence, 
\begin{equation}
I_W(p_\rv{X}^\star+t\cdot d_\rv{X}) \leq C(W_{\rv{Y}|\rv{X}}) - \alpha \cdot t^2
\end{equation}
for all $p_\rv{X}^\star\in\Pi$, $t\in[0,\mu)$ and $d_{\rv{X}} \in \mathbb{R}^{\set{X}}$ with $\norm{d_{\rv{X}}}_2 = 1$ and $p_\rv{X}^\star+ t \cdot d_\rv{X} \in \set{P}(\set{X}) \setminus \Pi$.

Finally, to show~\eqref{eq:main}, we substitute $p_\rv{X}^\star\gets p_\rv{X}^\Pi$, $d_\rv{X}\gets \frac{p_\rv{X}-p_\rv{X}^\Pi}{\norm{p_\rv{X}-p_\rv{X}^\Pi}_2}$, and $t\gets \norm{p_\rv{X}-p_\rv{X}^\Pi}_2$ for each $p_\rv{X}\in\Pi_\mu\setminus\Pi$ and note that the inequality is trivial when $p_\rv{X}\in\Pi$.
\end{proof}

\section{Channel with linear input constraints}
In this section, we consider channels with linear constraints on its input probability space. 
In the following, we look at input distribution that can only be chosen from some subset of the probability polytope as
\begin{equation}
\set{P}_\set{A}(\set{X}) \defeq \left\{p\in\set{P}(\set{X}) \ \middle\vert \ \inner{p}{a} \geq 0 \quad\forall a\in\set{A}\right\}
\end{equation}
where $\set{X}$ is the input alphabet, and $\set{A}$ is some finite set of vectors in $\mathbb{R}^\set{\set{X}}$.
Such constraints have practical uses;
for example, an average power constraint can be formulated as 
\begin{equation}
\sum_{x\in\set{X}} p(x) \cdot x^2 \leq P
\end{equation}
where $x^2$ is (proportional to) the amount of power needed for sending the symbol $x$, and $P$ is the available average power per channel use. (This corresponds to the choice $a(x) = P - x^2$.)
The highest rate of realizable transmission over channel $W_{\rv{Y}|\rv{X}}\in\set{P}(\set{Y}|\rv{X})$ under such constraints, \aka the \emph{constrained channel capacity} can be expressed as
\begin{equation}
C_\set{A}(W_{\rv{Y}|\rv{X}}) = \sup_{p_\rv{X}\in\set{P}_\set{A}(\set{X})} I(\rv{X}:\rv{Y})_{p_\rv{X}\cdot W_{\rv{Y}|\rv{X}}}.
\end{equation}
Similar to the quadratic decaying property of $I_W$ in the unconstrained case, the next theorem holds.

\begin{theorem}
Let $W_{\rv{Y}|\rv{X}}\in\set{P}(\set{Y}|\set{X})$ be a DMC from $\set{X}$ to $\set{Y}$ where $\set{X}$ and $\set{Y}$ are some finite sets.
Let $\set{A}$ be a finite subset of $\mathbb{R}^\set{X}$.
There exists some $\alpha, \mu>0$ such that 
\begin{equation}
I_W(p_\rv{X}) \leq C_\set{A}(W_{\rv{Y}|\rv{X}}) - \alpha\cdot\norm{p_\rv{X}-p_\rv{X}^{\Pi^{\set{A}}}}_2^2
\end{equation}
for all $p_\rv{X}\in\Pi^{\set{A}}_\mu$ where the set of {constrained capacity achieving distributions} is $\Pi^\set{A} \defeq \argmax_{p_\rv{X}\in\set{P}_\set{A}(\set{X})} I(\rv{X}:\rv{Y})_{p_\rv{X}\cdot W_{\rv{Y}|\rv{X}}}$
and
%\defeq \left\{p_\rv{X}\in\set{P}_\set{A}(\set{X})\, \middle\vert\, I_W(p_\rv{X})=C_\set{A}(W_{\rv{Y}|\rv{X}}) \right\}$ 
we denote its $\mu$-neighborhood by $\Pi^{\set{A}}_\mu \defeq \big\{p_\rv{X}\in\set{P}_\set{A}(\set{X})\, \big\vert\, \big\| p_\rv{X}-p_\rv{X}^{\Pi^{\set{A}}} \big\|_2 \leq \mu\big\}$.
\end{theorem}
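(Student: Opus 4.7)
The plan is to follow the same four-step strategy as in the proof of Theorem~\ref{thm:main}. Start with the Taylor series expansion~\eqref{eq:Taylor:series:expansion} of $I_W$ around a point $p^\star_\rv{X}\in\Pi^\set{A}$ along a unit direction $d_\rv{X}$ that is \emph{feasible} at $p^\star_\rv{X}$, meaning $p^\star_\rv{X}+t\cdot d_\rv{X}\in\set{P}_\set{A}(\set{X})$ for all sufficiently small $t>0$. For this expansion to be independent of the choice of $p^\star_\rv{X}\in\Pi^\set{A}$, one needs the capacity-achieving output distribution $q^\star_\rv{Y}$ to be unique and have full support; uniqueness transfers from the unconstrained setting via the standard strict-concavity-of-entropy argument (see~\cite{shannon1957some}) applied to the convex set $\set{P}_\set{A}(\set{X})$, and full support can be assumed without loss of generality by restricting the output alphabet. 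The argument that at least one of $\nabla I^\trans\cdot d_\rv{X}$ and $d_\rv{X}^\trans\cdot H\cdot d_\rv{X}$ is strictly negative when $d_\rv{X}$ points away from $\Pi^\set{A}$ also transfers verbatim: $\nabla I^\trans\cdot d_\rv{X}\leq 0$ because $p^\star_\rv{X}$ is a constrained maximizer and $d_\rv{X}$ is feasible, $d_\rv{X}^\trans\cdot H\cdot d_\rv{X}\leq 0$ because $H$ is negative semi-definite, and if both vanish then $d_\rv{X}\in\Ker(W_{\rv{Y}|\rv{X}})$ forces $I_W$ to be constant along $d_\rv{X}$, contradicting $p^\star_\rv{X}+t\cdot d_\rv{X}\notin\Pi^\set{A}$.

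The main technical step is the constrained analogue of Lemma~\ref{lem:cone:closed}, namely that $\Cone\left(\left\{p_\rv{X}-p_\rv{X}^{\Pi^\set{A}}\,\middle\vert\,p_\rv{X}\in\set{P}_\set{A}(\set{X})\right\}\right)$ is closed. The proof parallels that of Lemma~\ref{lem:cone:closed}, using convexity of $\Pi^\set{A}$ (which follows from concavity of $I_W$ on the convex set $\set{P}_\set{A}(\set{X})$) to invoke the same convex-projection characterization $\inner{d_\rv{X}}{v-p^\star}\leq 0$ for all $v\in\Pi^\set{A}$. For each $p^\star\in\Pi^\set{A}$, define $\set{J}^\set{A}(p^\star)$ by replacing $\set{P}(\set{X})$ and $\Pi$ in~\eqref{eq:def:J:p} by $\set{P}_\set{A}(\set{X})$ and $\Pi^\set{A}$ respectively. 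One can then express $\Cone(\set{J}^\set{A}(p^\star))$ as the intersection of a closed polyhedral cone encoding feasibility ($\sum_x d_\rv{X}(x)=0$, together with $d_\rv{X}(x)\geq 0$ for $x\in\set{X}_0(p^\star)$ and $\inner{d_\rv{X}}{a}\geq 0$ for $a\in\set{A}_0(p^\star)\defeq\{a\in\set{A}:\inner{p^\star}{a}=0\}$) with the dual of another closed polyhedral cone encoding the projection condition. The resulting cone depends only on the pair $(\set{X}_0(p^\star),\set{A}_0(p^\star))$, which takes only finitely many values as $p^\star$ ranges over $\Pi^\set{A}$; hence the cone is a finite union of closed cones and thus closed.

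With this closedness, the rest of the proof is identical to Theorem~\ref{thm:main}: the set of unit valid directions is compact, so $\alpha\defeq\min_{d_\rv{X}}\alpha(d_\rv{X})>0$, and the tail bound $f(t)$ from~\eqref{eq:bound:third} (which depends only on $\size{\set{X}}$, $\size{\set{Y}}$, and $q_{\min}$, hence unchanged) can be absorbed for $\mu$ sufficiently small. The main obstacle is the extra bookkeeping in the constrained cone lemma, since $\set{P}_\set{A}(\set{X})$ is no longer a simplex and we must track both coordinate activities ($\set{X}_0$) and $\set{A}$-constraint activities ($\set{A}_0$); the underlying finite-combinatorial argument, however, carries over without essential change.
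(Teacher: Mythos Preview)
Your proposal is correct and follows essentially the same approach as the paper: extend Lemma~\ref{lem:cone:closed} to the constrained setting by tracking both the zero-coordinate set $\set{X}_0(p_\rv{X}^\star)$ and the active-constraint set $\set{A}_0(p_\rv{X}^\star)$, then rerun the four-step argument of Theorem~\ref{thm:main} verbatim. The one ingredient the paper makes slightly more explicit than you do is that $\Pi^\set{A}$ is the intersection of a \emph{fixed} affine subspace $p_\rv{X}^\star+\set{V}$ (with $\set{V}\leq\Ker(W_{\rv{Y}|\rv{X}})$) with $\set{P}_\set{A}(\set{X})$---this is what guarantees that the tangent cone to $\Pi^\set{A}$, and hence its dual, depends only on $(\set{X}_0,\set{A}_0)$---but you have all the pieces for this (unique output distribution, linearity of $I_W$ along $\Ker(W_{\rv{Y}|\rv{X}})$) already in hand.
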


We note that in a concurrent work~\cite{cheng2023mutual} this problem is investigated using different techniques, yielding explicit bounds on the constants $\alpha$ and $\mu$.

The idea of the proof of this theorem is exactly same as that of Theorem~\ref{thm:main} (see points 1)--4) at the beginning of Section~\ref{sec:alt:proof}) with the exception of point 3) where instead of considering all valid directions from points on $\Pi$ to points in $\set{P}(\set{X})$, one considers all valid directions to points in $\set{P}_\set{A}(\set{X})$.
The intuition is that since $\set{P}_\set{A}(\set{X})$ is the resultant polytope after finite number of ``straight cuts'' of $\set{P}(\set{X})$, the closedness argument surrounding point 3) shall still hold.
The proof proceeds analogously to the proof of Theorem~\ref{thm:main}, and thus we simply describe the necessary changes:
\begin{enumerate}
%\setcounter{enumi}{-1}
%\item Some trivial changes: Replace $\set{P}(\set{X})$ by $\set{P}_\set{A}(\set{X})$, and $\Pi$ by $\Pi^\set{A}$.
%Let $q^\star_\rv{Y}$ denote the \emph{unique} (to be justified below) capacity-achieving output distribution in this constrained scenario.
\item We use the notation $p^\star_\rv{X}\in\Pi^\set{A}$ to denote the nearest point in $\Pi^\set{A}$ with regards to $p_\rv{X}$.
\item Thanks to the linearity of the constraints, one can show $\Pi^\set{A}$ to be convex.
In particular, when $\Pi^\set{A}$ is not a singleton, let $p^\star_\rv{X}$ and $\tilde{p}^\star_\rv{X}$ be two (different) points on $\Pi^\set{A}$.
The function $t\mapsto I_W((1-t)\cdot p^\star_\rv{X}+ t\cdot \tilde{p}^\star_\rv{X})$ is concave.
Moreover, since the constraints are linear, all points $(1-t)\cdot p^\star_\rv{X}+ t\cdot \tilde{p}^\star_\rv{X}$ are feasible for $t\in[0,1]$.
This way, the fact that $p^\star_\rv{X}$ and $\tilde{p}^\star_\rv{X}$ are two maximal points forces $I_W((1-t)\cdot p^\star_\rv{X}+ t\cdot \tilde{p}^\star_\rv{X})$ to be constant for all $t\in[0,1]$.
In addition, by writing 
\begin{equation} \begin{aligned}
I_W((1-t)\cdot p^\star_\rv{X}+ t\cdot \tilde{p}^\star_\rv{X}) &= \inner*{(1-t)\cdot p^\star_\rv{X}+ t\cdot \tilde{p}^\star_\rv{X}}{\sum_{y}W_{\rv{Y}|\rv{X}}(y|\cdot)\log{W_{\rv{Y}|\rv{X}}(y|\cdot)}} \\
&\qquad + H\left((1-t)\cdot \sum_{x}W_{\rv{Y}|\rv{X}}(\cdot|x)\cdot p^\star_\rv{X} + t\cdot \sum_{x}W_{\rv{Y}|\rv{X}}(\cdot|x)\cdot \tilde{p}^\star_\rv{X} \right) ,
\end{aligned} \end{equation}
where the first term is linear, and the second term is strictly concave, we can see that the output distribution must stay constant as $t$ changes, \ie, $p^\star_\rv{X}-\tilde{p}^\star_\rv{X}\in\Ker(W_{\rv{Y}|\rv{X}})$. 
This further implies $(1-t)\cdot p^\star_\rv{X}+ t\cdot \tilde{p}^\star_\rv{X}\in\Pi^\set{A}$ as long as it is in the probability polytope and satisfies the linear constraints.
In other words, there exists some subspace $\set{V}\leq\Ker(W_{\rv{Y}|\rv{X}})$ such that $\Pi^\set{A}=\set{V}\cap\set{P}_\set{A}(\set{X})$.
\item We still define $\set{J}(p_\rv{X}^\star)$ in the same way as in~\eqref{eq:def:J:p} (with trivial changes).
However,~\eqref{eq:J:p} need to be rewritten as
    \begin{align}\label{eq:J:p:constrained}
    \set{J}(p_\rv{X}^\star) &= \left\{d_\rv{X}\in\mathbb{R}^\set{X}\, \middle\vert\ \begin{aligned}
    & \sum_{x\in\set{X}} d_\rv{X}(x) = 0 \\
    & d_\rv{X}(x)+p_\rv{X}^\star(x) \geq 0 \quad \forall x\in\set{X}\\
    & \sum_{x} a(x) \cdot \left( d_\rv{X}(x)+p_\rv{X}^\star(x) \right) \geq 0 \quad \forall a\in\set{A} \\
    & \inner{d_\rv{X}}{v} \leq 0 \quad \forall v\in\set{V}\text{ s.t. }
    \begin{aligned}
    &v(x)\geq 0 \ \forall x\in\set{X}_0(p_\rv{X}^\star) \\
    &\inner{v}{a}\geq 0 \ \forall a\in\set{A}_0(p_\rv{X}^\star)
    \end{aligned}
    \end{aligned}\right\},
    \end{align}
    where $\set{A}_0(p_\rv{X}^\star)\defeq\left\{a\in\set{A} \,\middle\vert\,\inner{a}{p_\rv{X}^\star}=0\right\}$.
    The first three constraints for the set ensure that $p_\rv{X}^\star+d_\rv{X}$ is a feasible point in $\set{P}_\set{A}(\set{X})$.
    The last constraint ensures that $p_\rv{X}^\star+d_\rv{X}$ has $p_\rv{X}^\star$ as the closest point in $\Pi^\set{A}$ where $v$ are all valid directions $p^\star_\set{X}$ can move along (for some small distance) without leaving $\Pi^\set{A}$.
\item As a consequence of the above change, we need to rewrite~\eqref{eq:cone:J:p} as
    \begin{align}
    \Cone\left(\set{J}(p_\rv{X}^\star)\right) &= \left\{d_\rv{X}\in\mathbb{R}^\set{X} \,\middle\vert\,\begin{aligned}
    & \sum_{x\in\set{X}} d_\rv{X}(x) = 0 \\
    & d_\rv{X}(x)\geq 0 \quad \forall x\in\set{X}_0(p_\rv{X}^\star)\\
    & \inner{d_\rv{X}}{a}\geq 0 \quad \forall a\in\set{A}_0(p_\rv{X}^\star)
    \end{aligned}\right\} \cap -\left\{v\in\set{V}\middle\vert
    \begin{aligned}
    &v(x)\geq 0 \ \forall x\in\set{X}_0(p_\rv{X}^\star) \\
    &\inner{v}{a}\geq 0 \ \forall a\in\set{A}_0(p_\rv{X}^\star)
    \end{aligned}
    \right\}^*.
    \end{align}
    Again, we can argue that there are finite number of different cones for all $p^\star_\rv{X}$ since above expression only depends on $\set{X}_0(p_\rv{X}^\star)$ and $\set{A}_0(p_\rv{X}^\star)$.
    This ensures the closeness of the cone in~\eqref{eq:cone:J}, which proved Lemma~\ref{lem:cone:closed} under the constrained scenario.
\end{enumerate}
%***********************************************************
\section*{Acknowledgements}
We thank Yury Polyanskiy for helpful discussions about the proof in~\cite{polyanskiy2010channel} and possible ways to complete it. In the end, we found our approach easier to fully formalize. We also thank him for encouraging us to write this comment and suggesting the extensions to the constrained case.
We would also like to thank Hao-Chung Cheng for discussions on closing the gap in Strassen's proof.
MC and MT are supported by NUS startup grants (R-263-000-E32-133 and R-263-000-E32-731).
%***********************************************************
\bibliographystyle{IEEEtran}
\bibliography{reference}
%***********************************************************
\end{document}